\definecolor{orange}{rgb}{1,0.5,0}
\newcommand{\removelatexerror}{\let\@latex@error\@gobble}
\newtheorem{proposition}{Proposition}
\newcommand*{\glsplainhyperlink}[2]{%
  \colorlet{currenttext}{.}
  \colorlet{currentlink}{\@linkcolor}
  \hypersetup{linkcolor=currenttext}
  \hyperlink{#1}{#2}%
  \hypersetup{linkcolor=currentlink}
}
\let\@glslink\glsplainhyperlink
\newacronym{5g}{5G}{fifith generation}
\newacronym{6g}{6G}{sixth generation}
\newacronym{awgn}{AWGN}{additive white Gaussian noise}
\newacronym{ao}{AO}{alternating optimization}
\newacronym{bs}{BS}{base station}
\newacronym{bdris}{BD-RIS}{beyond diagonal reconfigurable intelligent surface}
\newacronym{csi}{CSI}{channel state information}
\newacronym{cdf}{CDF}{cumulative  distribution  function}
\newacronym{dl}{DL}{downlink}
\newacronym{dris}{D-RIS}{diagonal reconfigurable intelligent surfaces}
\newacronym{gda}{GDA}{gradient descent algorithm}
\newacronym{iid}{i.i.d}{independent and identically distributed}
\newacronym{iot}{IoT}{Internet of Things}
\newacronym{irs}{IRS}{intelligent reflecting surfaces}
\newacronym{isac}{ISAC}{integrated sensing and communication}
\newacronym{jmld}{JMLD}{joint-multiuser maximum likelihood detector}
\newacronym{lut}{LUT}{look-up table}
\newacronym{lhs}{LHS}{left-hand side}
\newacronym{mpsk}{$M$-PSK}{M-ary phase shift keying}
\newacronym{mimo}{MIMO}{multiple-input-multiple-output}
\newacronym{miso}{MISO}{multiple-input-single-output}
\newacronym{mumiso}{MU-MISO}{multiuser multiple-input-single-output}
\newacronym{mld}{MLD}{maximum likelihood detector}
\newacronym{mrt}{MRT}{maximum ratio transmission}
\newacronym{ndris}{ND-RIS}{non-diagonal reconfigurable intelligent surfaces}
\newacronym{pc}{PC}{power control}
\newacronym{qos}{QoS}{quality of service}
\newacronym{ris}{RIS}{reconfigurable intelligent surface}
\newacronym{re}{RE}{reflecting element}
\newacronym{rf}{RF}{radio-frequency}
\newacronym{rhs}{RHS}{right-hand side}
\newacronym{sim}{SIM}{stacked intelligent metasurfaces}
\newacronym{s-param}{$S$-parameters}{scattering parameters}
\newacronym{sop}{SOP}{sum-of-product}
\newacronym{svd}{SVD}{singular value decomposition}
\newacronym{se}{SE}{spectral efficiency}
\newacronym{snr}{SNR}{signal to noise ratio}
\newacronym{siso}{SISO}{single-input-single-output}
\newacronym{sinr}{SINR}{signal to interference and noise ratio}
\newacronym{swipt}{SWIPT}{simultaneous wireless information and power transfer}
\newacronym{t-param}{$T$-parameters}{transfer scattering parameters}
\newacronym{ul}{UL}{uplink}
\newacronym{uav}{UAV}{unmanned aerial vehicle}
\newacronym{zf}{ZF}{zero forcing}
\newacronym{z-param}{$Z$-parameters}{impedance parameters}
\renewenvironment{thebibliography}[1]{
  \begin{oldthebibliography}{#1}
    \setlength{\itemsep}{0.01em}
    \setlength{\parskip}{-0.12em}
}
{
  \end{oldthebibliography}
}
\newtheorem{remark}{Remark}
\newtheorem{example}{Example}
\begin{document}
\title{\LARGE $T$-Parameters Based Modeling for Stacked Intelligent Metasurfaces: Tractable and  Physically Consistent Model}
\author{Hamad~Yahya,~
\IEEEmembership{Member,~IEEE,} Matteo~Nerini,~%
\IEEEmembership{Member,~IEEE,} Bruno~Clerckx,~%
\IEEEmembership{Fellow,~IEEE,} and~Merouane~Debbah,~%
\IEEEmembership{Fellow,~IEEE}
\thanks{Hamad Yahya is with the Department of Electrical Engineering, Khalifa University of Science and Technology, Abu Dhabi 127788, UAE (email: \href{mailto:hamad.myahya@ku.ac.ae}{hamad.myahya@ku.ac.ae}).

Matteo Nerini and Bruno Clerckx are with the  Department of Electrical and Electronic Engineering, Imperial College London, London SW7 2AZ, U.K., (e-mail: \{\href{mailto:m.nerini20@imperial.ac.uk}{m.nerini20}, \href{mailto:b.clerckx@imperial.ac.uk}{b.clerckx}\}@imperial.ac.uk).

Merouane Debbah is with the 6G Research Center, Khalifa University of Science and Technology, Abu Dhabi 127788, UAE (email: \href{mailto:merouane.debbah@ku.ac.ae}{merouane.debbah@ku.ac.ae}).
}}\maketitle

\begin{abstract}
This work develops a physically consistent model for \gls{sim} using multiport network theory and \gls{t-param}. Unlike the \gls{s-param} model, the developed \gls{t-param} model is simpler and more tractable. Moreover, the \gls{t-param} constraints for lossless reciprocal \glspl{ris} are derived. Additionally, a \gls{gda} is introduced to maximize sum-rate in \gls{sim}-aided multiuser scenarios, demonstrating that mutual coupling and feedback between consecutive layers enhance performance. However, increasing \gls{sim} layers with a fixed total number of elements typically degrades sum-rate, unless the simplified channel model employing Rayleigh-Sommerfeld diffraction coefficients is utilized.
\end{abstract}
\glsresetall

\markboth{Draft,~Vol.~xx, No.~xx,
Apr.~2025}{YAHYA \MakeLowercase{\textit{et al.}}: Tractable and Physically Consistent Channel Model for Stacked Intelligent Metasurfaces} 

\begin{IEEEkeywords}
Modeling, multiport network theory, \gls{sim}, \gls{t-param}. 
\end{IEEEkeywords}
\IEEEpeerreviewmaketitle
\glsresetall

\section{Introduction}\label{sec1}
\IEEEPARstart{T}{he} advent of intelligent surfaces such as \gls{ris} and diffractive neural networks gave rise to the new paradigm of \gls{sim} \cite{An2023-ICC,An2024-WCM,Rev1_1}, which is constructed by stacking layers of \gls{ris} in transmission mode to manipulate the electromagnetic waves. The novel idea of stacking multilayer communication devices dates back to 2012 when Boccia \textit{et al.} \cite{Boccia2012-TMTT} proposed stacking multilayer antenna-filter antennas system to achieve beam steering. In 2024, Dai \textit{et al.} \cite{Linglong2024-patent} have been granted a United States patent on multilayer \glspl{ris}. Recently, Nerini and Clerckx \cite{Matteo2024-LCOM} developed a physically consistent model for \gls{sim} using multiport network theory and cascaded \gls{s-param}, explicitly laying out the assumptions for the widely used simplified channel model \cite{An2023-ICC,An2024-JSAC}. However, their exact channel model is highly complex and non-tractable due to its nested nature and excessive number of matrix inversions. Similarly, Abrardo \textit{et al.} \cite{abrardo2025novel} proposed an \gls{z-param} model for \gls{sim}, constructing its equivalent \gls{z-param} matrix using band matrices. Therefore, we explore an alternative representation for \gls{sim} using \gls{t-param}, which conveniently characterize a chain of multiport networks through straightforward matrix multiplication of their individual \gls{t-param} matrices \cite{Markos2008wave}. This approach motivates the development of a new, physically consistent model for \gls{sim} that is equivalent to the cascaded \gls{s-param} model, yet it is less complex and more tractable. We further derive the constraints of \gls{t-param} for lossless and reciprocal \glspl{ris}, which are used to design a \gls{gda} that controls phase shifts of \gls{sim} to maximize the sum-rate in multiuser scenarios. Interested readers can find the source code for this article on \href{https://github.com/hmjasmi/sim-t-param}{https://github.com/hmjasmi/sim-t-param}.

\section{Cascaded Multiport Networks Modeling}\label{sec-system-model}
\begin{figure*}[t]
    \centering
    {\subfigure{\includegraphics[scale=0.7]{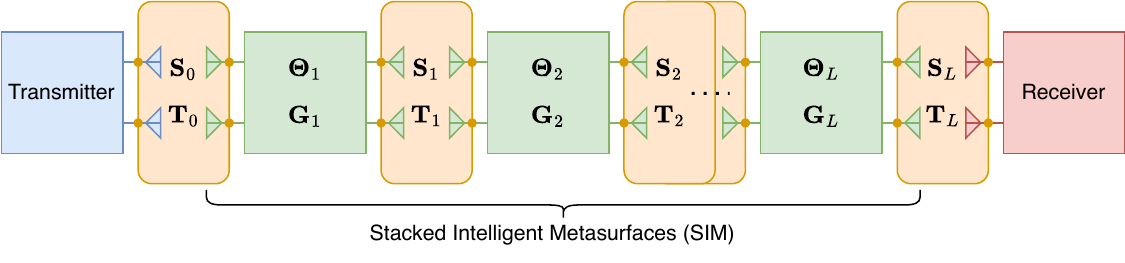}}}\vspace{-5mm}
    \caption{Multiport network model for \gls{sim}-aided communication system based on \gls{s-param} and \gls{t-param}, where $L$-layer \gls{sim} is considered.}
    \label{fig:BlockDiagram}\vspace{-5mm}
\end{figure*}

\subsection{S-parameters Formulation}\label{subsec-S-param}
The \gls{s-param} matrix of a multiport network associates the reflected waves with the incident waves via a linear transformation. Without loss of generality, we consider a cascade configuration in which a multiport network has ports on two opposite sides. The ports on the left side are the input ports, while the ports on the right side are output ports. When the number of input and output ports is equal, the multiport network is known as a ``balanced" multiport network \cite{Frei2008-TMTT}. Therefore, a balanced $2N$-port network can model \gls{sim} with $L$ cascaded layers of \glspl{ris} and $N$ cells each, where the $1$st layer is the input layer and the $L$th layer is the output layer \cite{An2023-ICC}. Hence, the reflected waves vector of \gls{sim} can be expressed as $\mathbf{b}_I=\mathbf{S}_I\mathbf{a}_I$, and expanded as follows
\begin{align}
\left[ 
\begin{array}{c}
\mathbf{b}_{\text{i}1} \\ 
\mathbf{b}_{\text{o}L}%
\end{array}%
\right] & =\mathbf{S}_I\left[ 
\begin{array}{c}
\mathbf{a}_{\text{i}1} \\ 
\mathbf{a}_{\text{o}L}%
\end{array}%
\right] =\left[ 
\begin{array}{c}
\mathbf{S}_{I,11}\mathbf{a}_{\text{i}1}+\mathbf{S}_{I,12}\mathbf{a}_{\text{o}L} \\ 
\mathbf{S}_{I,21}\mathbf{a}_{\text{i}1}+\mathbf{S}_{I,22}\mathbf{a}_{\text{o}L}%
\end{array}%
\right]   \label{eq-S}
\end{align}%
where $\mathbf{S}_I\in \mathbb{C}^{2N\times 2N}$ denotes the \gls{s-param} matrix of \gls{sim} with the ($i,j$)th block denoted as $\mathbf{S}_{I,ij}\in\mathbb{C}^{N\times N}$, $i,j\in\{1,2\}$, $\mathbf{b}_I=\left[ \mathbf{b}_{\text{i}1}^{T},\mathbf{b}_{\text{o%
}L}^{T}\right] ^{T}\in \mathbb{C}^{2N\times 1}$ denotes the reflected waves, $%
\mathbf{b}_{\text{i}1}\in \mathbb{C}^{N\times 1}$ and $\mathbf{b}_{\text{o}L%
}\in \mathbb{C}^{N\times 1}$ represent the reflected waves at the input ports of the input layer and the output ports of the output layer, $\mathbf{a}_{I}=\left[ \mathbf{a}_{\text{i}1}^{T},\mathbf{a}_{\text{o}L}^{T}%
\right] ^{T}\in \mathbb{C}^{2N\times 1}$ denotes the incident waves, $%
\mathbf{a}_{\text{i}1}\in \mathbb{C}^{N\times 1}$ and $\mathbf{a}_{\text{o}L%
}\in \mathbb{C}^{N\times 1}$ represent the incident waves at the input ports of the input layer and the output ports of the output layer.

The recursive approach presented in \cite{Matteo2024-LCOM} can be used to compute $\mathbf{S}_{I}$ such that 
\begin{equation}
\mathbf{S}_{I}=\mathcal{S}\left( \mathcal{S}\left( \cdots \mathcal{S}\left( 
\mathcal{S}\left( \boldsymbol{\Theta }_{1},\mathbf{S}_{1}\right) ,%
\boldsymbol{\Theta }_{2}\right) ,\ldots ,\mathbf{S}_{L-1}\right) ,%
\boldsymbol{\Theta }_{L}\right) \label{eq-S-casc}
\end{equation}%
where  $\mathcal{S}\left( \cdot \right) $ denotes a function that will be defined later, the \gls{s-param} matrix of the $l$th layer \gls{ris}, $\forall l\in \left\{ 1,\ldots ,L\right\} $, is denoted as $\boldsymbol{%
\Theta }_{l}=\left[ 
\begin{array}{cc}
\boldsymbol{\Theta }_{l,11} & \boldsymbol{\Theta }_{l,12} \\ 
\boldsymbol{\Theta }_{l,21} & \boldsymbol{\Theta }_{l,22}
\end{array}%
\right]\in \mathbb{C}^{2N\times 2N}$ and $\boldsymbol{\Theta }_{l,ij}\in\mathbb{C}^{N\times N}$. In addition, the \gls{s-param} of the transmission medium between the $l$th and $(l+1)$th \gls{ris} layer is denoted as $\mathbf{S}_{l}=\left[ 
\begin{array}{cc}
\mathbf{S}_{l,11} & \mathbf{S}_{l,12} \\ 
\mathbf{S}_{l,21} & \mathbf{S}_{l,22}%
\end{array}%
\right] \in \mathbb{C}^{2N\times 2N}$, $\forall l\in \left\{ 1,\ldots ,L-1\right\} $ , where $\mathbf{S}_{l,ij}\in \mathbb{C}^{N\times N}$, $\mathbf{S}_{l,11}$ and $\mathbf{S}_{l,22}$ refer to the antenna mutual coupling and self-impedance at the $l$th and $(l+1)$th \gls{ris} layer, $\mathbf{S}_{l,12}$ and $\mathbf{S}_{l,21}$ refer to the transmission \gls{s-param} matrices from the $(l+1)$th to $l$th \gls{ris} layer, and from the $l$th to $(l+1)$th \gls{ris} layer. The matrix $\mathbf{S}_{l}$ can be computed as 
$\mathbf{S}_{l}=\left(\mathbf{Z}_l+Z_0\mathbf{I}_{2N}\right)^{-1}\left(\mathbf{Z}_l-Z_0\mathbf{I}_{2N}\right)$,
where $\mathbf{I}_{2N}\in\mathbb{R}^{2N\times 2N}$ is the identity matrix, $Z_0=50$ $\Omega$ is the characteristic impedance, $\mathbf{Z}_l=\left[ 
\begin{array}{cc}
\mathbf{Z}_{l,11} & \mathbf{Z}_{l,12} \\ 
\mathbf{Z}_{l,21} & \mathbf{Z}_{l,22}%
\end{array}%
\right] \in \mathbb{C}^{2N\times 2N}$ is the transmission medium impedance matrix, $\mathbf{Z}_{l,ij}\in \mathbb{C}^{N\times N}$, $\mathbf{Z}_{l,11}$ and $\mathbf{Z}_{l,22}$ represent the impedance matrices at $l$th and $(l+1)$th \gls{ris} layers such that the diagonal entries are the self-impedance and the off-diagonal entries are the antenna mutual coupling, $\mathbf{Z}_{l,12}$ and $\mathbf{Z}_{l,21}$ refer to the transmission impedance matrices from the $(l+1)$th to $l$th \gls{ris} layer, and from the $l$th to $(l+1)$th \gls{ris} layer. In a reciprocal system, we have $\mathbf{Z}_{l,12}=\mathbf{Z}_{l,21}^T$. Fig. \ref{fig:BlockDiagram} shows the block diagram of \gls{sim}-aided communication system considering multiport network theory. 

Furthermore, introduced in \eqref{eq-S-casc},
$\mathcal{S}\left( \cdot \right) $ computes the equivalent \gls{s-param}
matrix for a cascade of networks. The process for cascading network $P$ and
network $Q$ which are balanced $2N$-port networks results in a balanced $2N$-port network $R$ with \gls{s-param} matrix $\mathbf{R}\in \mathbb{C}^{2N\times 2N}$
that is defined as $\mathbf{R} =\left[ 
\begin{array}{cc}
\mathbf{R}_{11} & \mathbf{R}_{12} \\ 
\mathbf{R}_{21} & \mathbf{R}_{22}%
\end{array}%
\right] \triangleq\mathcal{S}\left( \mathbf{P},\mathbf{Q}\right)$, and its ($i,j$)th blocks are defined in \cite[Eq. (12)--(16)]{Matteo2024-LCOM} such that
\begin{multline}
\left[\!\! 
\begin{array}{cc}
\mathbf{R}_{11} \\
\mathbf{R}_{12} \\ 
\mathbf{R}_{21} \\
\mathbf{R}_{22}
\end{array}\!\!
\right]=\left[\!\! 
\begin{array}{l}
\mathbf{P}_{11}+\mathbf{P}_{12}\left( \mathbf{I}_N-\mathbf{Q}_{11}\mathbf{P}%
_{22}\right) ^{-1}\mathbf{Q}_{11}\mathbf{P}_{21} \\ \mathbf{P}_{12}\left( 
\mathbf{I}_N-\mathbf{Q}_{11}\mathbf{P}_{22}\right) ^{-1}\mathbf{Q}_{12} \\ 
\mathbf{Q}_{21}\left( \mathbf{I}_N-\mathbf{P}_{22}\mathbf{Q}_{11}\right) ^{-1}\mathbf{P}_{21} \\ \mathbf{Q}_{22}+\mathbf{Q}_{21}\left( \mathbf{I}_N-\mathbf{P}%
_{22}\mathbf{Q}_{11}\right) ^{-1}\mathbf{P}_{22}\mathbf{Q}_{12}
\end{array}\!\!
\right] \label{eq-S-casc-oper}
\end{multline}
where $\mathbf{%
P},\mathbf{%
Q}\in \mathbb{C}^{2N\times 2N}$ are the \gls{s-param} matrices of networks $P$ and $Q$ with $\mathbf{P}_{ij},\mathbf{Q}_{ij}\in\mathbb{C}^{N\times N}$.

For a \gls{sim} implemented with lossless and reciprocal \glspl{ris}, the unitary constraint and symmetry constraints are imposed on $\boldsymbol{\Theta }_{l}$ such that
\begin{align}
\boldsymbol{\Theta }_{l}^{H}\boldsymbol{\Theta }_{l}&=\mathbf{I}_{2N}
\label{eq-losslessS}\\
\boldsymbol{\Theta }_{l}&=\boldsymbol{\Theta }_{l}^{T}.  \label{eq-SymmS}
\end{align}

To accurately model \gls{sim} using physically consistent models and capture the impact of deploying \gls{sim} in a radio environment, we consider the following assumptions:
\begin{enumerate}[{A}.1:]
    \item Unilateral approximation between transmitter and the $1$st layer \gls{ris}, and between the $L$th layer \gls{ris} and receiver.
    \item Matched transmitter and receiver antennas with $Z_0$.
    \item No mutual coupling between transmitter antennas, between receiver antennas, and between the $1$st layer \gls{ris} elements facing the transmitter.
\end{enumerate}
After expanding the general channel model in \cite[Eq. (35)]{Matteo2024-LCOM} and following the approach in \cite[Sec. V-C]{MatteoUniv2024-TWC}, the channel model based on the $S$-parameters can be written as
\begin{equation}
\mathbf{H}=\mathbf{H}_{RI}\mathbf{S}_{I,21}\mathbf{H}_{IT}\label{eq-channel-S}
\end{equation}%
where $\mathbf{H}_{RI}=\mathbf{S}_{L,21}\in \mathbb{C}^{{K\times N}}$, $\mathbf{S}_{L,21}$ is the  (2,1)th block of $\mathbf{S}_{L}\in\mathbb{C}^{(K+N)\times(K+N)}$ which is the \gls{s-param} matrix of the wireless channel between the $L$th layer \gls{ris} and the $K$ users equipped with single antennas , $\mathbf{H}_{IT}=\mathbf{S}_{0,21}\in 
\mathbb{C}^{{N\times K}}$, $\mathbf{S}_{0,21}$ is the (2,1)th block of $\mathbf{S}_{0}\in\mathbb{C}^{(K+N)\times(K+N)}$ which is the \gls{s-param} matrix of the wireless channel between the $K$ transmitter antennas and the $1$st layer \gls{ris}.

\begin{remark}
To compute the channel model in \eqref{eq-channel-S}, $\mathbf{S}_I$ needs to be computed recursively using \eqref{eq-S-casc}. Hence, the non-linear operator $\mathcal{S}\left( \cdot \right) $ has to be used $2\left( L-1\right)$ times. Such an expression is highly non-tractable due to its nested nature and block-level operations. In addition, the complexity associated with evaluating \eqref{eq-channel-S} is dominated by the number of matrix inversions, which grows excessively with $L$. Moreover, the optimization variables are highly coupled and difficult to optimize. 
To illustrate the disadvantages of the channel
model in \eqref{eq-channel-S}, we consider the following example.
\end{remark}

\begin{example}
Let $L=2$. Therefore, 
$\mathbf{S}_{I}=\mathcal{S}\left( \boldsymbol{\Theta }_{1},\mathcal{S}\left( 
\mathbf{S}_{1},\boldsymbol{\Theta }_{2}\right) \right)$ and its (2,1)th block can be expanded as
\begin{multline}
\mathbf{S}_{I,21} =  
\boldsymbol{\Theta }_{2,21} \left( \mathbf{I}_N - \mathbf{S}_{1,22} \boldsymbol{\Theta }_{2,11} \right)^{-1} \mathbf{S}_{1,21} \\
\times \left( \mathbf{I}_N - \boldsymbol{\Theta }_{1,22} \left( \mathbf{S}_{1,11} + \mathbf{S}_{1,12} \left( \mathbf{I}_N - \boldsymbol{\Theta }_{2,11} \mathbf{S}_{1,22} \right)^{-1} \right.\right. \\
\times \boldsymbol{\Theta }_{2,11} \mathbf{S}_{1,21} \Bigr) \Bigr)^{-1} \boldsymbol{\Theta }_{1,21}
\end{multline}%
where the number of matrix inversions in this expression is $N_{\text{inv-S}}=3$ and it grows
excessively with $L$. For
$L=3,4,5,6$, $N_{\text{inv-S}}=11,30,67,145$, which can approximated using a fourth order polynomial, i.e.,  $N_{\text{inv-S}}\approx\frac{2}{3}L^4-8\frac{1}{6}L^3+42\frac{1}{3}L^{2}-91\frac{5}{6}L+72$. We note that nested matrix inversions make controlling
optimization variables more complicated. Therefore, we propose a more
tractable formulation that is less complex and easier to optimize.
\end{example}

\subsection{T-parameters Formulation}\label{subsec-T-param}

The \gls{t-param} matrix of a multiport network have been introduced as a convenient mathematical concept for the cascade configuration, where it associates the reflected and incident waves at the input ports with the reflected and incident waves at the output ports via a linear
transformation. This transformation can be expressed for \gls{sim} as follows 
\begin{equation}
\left[ 
\begin{array}{c}
\mathbf{b}_{\text{i}1} \\ 
\mathbf{a}_{\text{i}1}%
\end{array}%
\right] =\mathbf{T}_{I}\left[ 
\begin{array}{c}
\mathbf{a}_{\text{o}L} \\ 
\mathbf{b}_{\text{o}L}%
\end{array}%
\right] =\left[ 
\begin{array}{c}
\mathbf{T}_{I,11}\mathbf{a}_{\text{o}L}+\mathbf{T}_{I,12}\mathbf{b}_{\text{o}L} \\ 
\mathbf{T}_{I,21}\mathbf{a}_{\text{o}L}+\mathbf{T}_{I,22}\mathbf{b}_{\text{o}L}%
\end{array}%
\right]   \label{eq-T}
\end{equation}%
where $\mathbf{T}_{I}\in \mathbb{C}^{2N\times 2N}$ denotes the \gls{t-param} matrix of \gls{sim} with the ($i,j$)th block denoted as $\mathbf{T}_{I,ij}\in\mathbb{C}^{N\times N}$. Furthermore, $\mathbf{T}_{I}$ can be
computed as $\mathbf{T}_{I}\triangleq \mathcal{T}\left( 
\mathbf{S}_{I}\right) $ and vice versa $\mathbf{S}_I$ can be computed as $\mathbf{S}_{I}\triangleq \mathcal{%
T}^{-1}\left( \mathbf{T}_{I}\right) $, where the definitions of $\mathbf{T}_{I,ij},\mathbf{S}_{I,ij}$ are given as \cite[Eq. (11)--(12)]{Frei2008-TMTT}, 
\begin{align}
\!\!\left[ 
\begin{array}{cc}
\mathbf{T}_{I_{11}} & \mathbf{T}_{I_{12}} \\ 
\mathbf{T}_{I_{21}} & \mathbf{T}_{I_{22}}%
\end{array}%
\right]\!&=\!\left[ 
\begin{array}{cc}
\mathbf{S}_{I_{12}}-\mathbf{S}_{I_{11}}\mathbf{S}_{I_{21}}^{-1}\mathbf{S}%
_{I_{22}} & \mathbf{S}_{I_{11}}\mathbf{S}_{I_{21}}^{-1} \\ 
-\mathbf{S}_{I_{21}}^{-1}\mathbf{S}_{I_{22}} & \mathbf{S}_{I_{21}}^{-1}%
\end{array}%
\right]\\
\!\!\left[ 
\begin{array}{cc}
\mathbf{S}_{I_{11}} & \mathbf{S}_{I_{12}} \\ 
\mathbf{S}_{I_{21}} & \mathbf{S}_{I_{22}}%
\end{array}%
\right]\!&=\!  \left[ 
\begin{array}{cc}
\!\!\mathbf{T}_{I_{12}}\mathbf{T}_{I_{22}}^{-1} & \mathbf{T}_{I_{11}}\!\!-\!\mathbf{T}%
_{I_{12}}\mathbf{T}_{I_{22}}^{-1}\mathbf{T}_{I_{21}}\!\!\!\! \\ 
\mathbf{T}_{I_{22}}^{-1} & -\mathbf{T}_{I_{22}}^{-1}\mathbf{T}_{I_{21}}
\end{array}
\right] .
\end{align}

While \gls{sim} layers are generally balanced multiport networks, generalizing the \gls{t-param} for devices with unbalanced multiport network would result in information loss \cite{Frei2008-TMTT}.
\begin{proposition}
The \gls{t-param} matrix of \gls{sim} with $L$-layer \glspl{ris} and $N$ cells can be expressed as
\begin{equation}
\mathbf{T}_{I}={\mathbf{G}}_{1}\mathbf{T}_{1}{\mathbf{G}}_{2}\cdots \mathbf{T%
}_{L-1}{\mathbf{G}}_{L}\label{eq-T-casc}
\end{equation}%
where ${\mathbf{G}}_{l}\in \mathbb{C}^{2N\times 2N}$ denotes the \gls{t-param} matrix of the $l$th layer \gls{ris}, $\forall l\in \left\{ 1,\ldots ,L\right\} $, and is given as ${\mathbf{G}}_{l}=\left[ 
\begin{array}{cc}
{\mathbf{G}}_{l,11} & {\mathbf{G}}_{l,12} \\ 
{\mathbf{G}}_{l,21} & {\mathbf{G}}_{l,22}%
\end{array}%
\right] \triangleq \mathcal{T}\left( \boldsymbol{\Theta }_{l}\right) $, with the ($i,j$)th block denoted as $\mathbf{G}_{I,ij}\in\mathbb{C}^{N\times N}$. In addition, the \gls{t-param} matrix of the transmission medium between the $l$th and $(l+1)$th \gls{ris} layer is denoted as $\mathbf{T}_{l}\in \mathbb{C}^{2N\times 2N}$ $\forall l\in \left\{ 1,\ldots ,L-1\right\} $, and is given as $\mathbf{T}_{l}=\left[ 
\begin{array}{cc}
\mathbf{T}_{l,11} & \mathbf{T}_{l,12} \\ 
\mathbf{T}_{l,21} & \mathbf{T}_{l,22}%
\end{array}%
\right] \triangleq \mathcal{T}\left( \mathbf{S}_{l}\right) $, $\forall l\in
\left\{ 1,2,\ldots ,L-1\right\} $, with the ($i,j$)th block denoted as $\mathbf{T}_{l,ij}\in\mathbb{C}^{N\times N}$. 
\end{proposition}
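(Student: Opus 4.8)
The plan is to reduce the statement to the single structural property that makes $T$-parameters useful---a cascade corresponds to a matrix product---and then finish by induction on $L$. Concretely, I would first prove the \emph{cascade lemma}: for any two balanced $2N$-port networks $P$ and $Q$ with $S$-parameter matrices $\mathbf{P},\mathbf{Q}$ whose transmission blocks $\mathbf{P}_{21},\mathbf{Q}_{21}$ are invertible (so that $\mathcal{T}$ is well defined),
\[
\mathcal{T}\!\left(\mathcal{S}\!\left(\mathbf{P},\mathbf{Q}\right)\right)=\mathcal{T}\!\left(\mathbf{P}\right)\mathcal{T}\!\left(\mathbf{Q}\right).
\]
Everything else is bookkeeping on top of this identity.

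The cleanest route to the lemma is the physical wave-variable argument, which I would prefer over substituting \eqref{eq-S-casc-oper} directly into the definition of $\mathcal{T}$. Writing $\mathbf{a},\mathbf{b}$ for incident and reflected waves, the $T$-matrix of a network maps its input-side pair to its output-side pair exactly as in \eqref{eq-T}, i.e. $[\mathbf{b}_{\mathrm{in}}^{T},\mathbf{a}_{\mathrm{in}}^{T}]^{T}=\mathcal{T}(\cdot)\,[\mathbf{a}_{\mathrm{out}}^{T},\mathbf{b}_{\mathrm{out}}^{T}]^{T}$. Cascading $P$ into $Q$ imposes the interconnection conditions that the wave leaving $P$ at its output enters $Q$ at its input and conversely, namely $\mathbf{b}_{\mathrm{out}}^{P}=\mathbf{a}_{\mathrm{in}}^{Q}$ and $\mathbf{a}_{\mathrm{out}}^{P}=\mathbf{b}_{\mathrm{in}}^{Q}$, so that the output-side pair of $P$ equals the input-side pair of $Q$. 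Substituting the $T$-relation of $Q$ into that of $P$ then expresses the input-side pair of $P$ as $\mathcal{T}(\mathbf{P})\mathcal{T}(\mathbf{Q})$ applied to the output-side pair of $Q$. Since the cascade $R=\mathcal{S}(\mathbf{P},\mathbf{Q})$ carries the input ports of $P$ and the output ports of $Q$, its $T$-matrix is by definition the matrix relating precisely these two pairs, whence $\mathcal{T}(\mathbf{R})=\mathcal{T}(\mathbf{P})\mathcal{T}(\mathbf{Q})$.

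With the lemma established, the proposition follows by induction on $L$. I would read \eqref{eq-S-casc} as the left-nested cascade of the ordered network list $\boldsymbol{\Theta}_1,\mathbf{S}_1,\boldsymbol{\Theta}_2,\mathbf{S}_2,\ldots,\mathbf{S}_{L-1},\boldsymbol{\Theta}_L$, take $\mathcal{T}$ of both sides, and peel off one outer $\mathcal{S}$ at a time. The base case $L=1$ is immediate, $\mathbf{T}_I=\mathcal{T}(\boldsymbol{\Theta}_1)=\mathbf{G}_1$; for the inductive step each application of the lemma replaces one $\mathcal{S}(\cdot,\cdot)$ by a product of two $\mathcal{T}$-factors. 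Because $\mathbf{G}_l=\mathcal{T}(\boldsymbol{\Theta}_l)$ and $\mathbf{T}_l=\mathcal{T}(\mathbf{S}_l)$ by definition, the nested expression telescopes into $\mathbf{G}_1\mathbf{T}_1\mathbf{G}_2\cdots\mathbf{T}_{L-1}\mathbf{G}_L$, which is exactly \eqref{eq-T-casc}.

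I expect the main obstacle to be consistency of conventions rather than any genuine difficulty: the interconnection conditions and the ordering of input versus output ports must be matched exactly to the block layout fixed in \eqref{eq-S}--\eqref{eq-T}, since a swapped pair or a transposed block would reverse or transpose the factors and destroy the clean left-to-right product. A secondary technical point is well-posedness, namely that every transmission block inverted inside $\mathcal{T}$ stays nonsingular along the recursion; I would justify this from the lossless and reciprocal structure \eqref{eq-losslessS}--\eqref{eq-SymmS} of the $\boldsymbol{\Theta}_l$ together with the passivity of the transmission media, or simply state it as a genericity assumption. An alternative, purely algebraic proof---inserting \eqref{eq-S-casc-oper} into the definition of $\mathcal{T}$ and matching the four blocks of $\mathcal{T}(\mathbf{P})\mathcal{T}(\mathbf{Q})$---is available but heavier, and I would keep it only as a verification fallback.
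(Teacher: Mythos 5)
Your proposal is correct and takes essentially the same approach as the paper's proof in Appendix~\ref{appA}: the physical wave-variable argument in which the cascade interconnection conditions $\mathbf{b}_{\text{o}p}=\mathbf{a}_{\text{i}q}$ and $\mathbf{a}_{\text{o}p}=\mathbf{b}_{\text{i}q}$ let you substitute the $T$-relation of $Q$ into that of $P$, yielding $\mathbf{T}_p\mathbf{T}_q$ for two networks, and then iterating over the chain. Your explicit lemma $\mathcal{T}\left(\mathcal{S}\left(\mathbf{P},\mathbf{Q}\right)\right)=\mathcal{T}\left(\mathbf{P}\right)\mathcal{T}\left(\mathbf{Q}\right)$, the formal induction, and the invertibility caveat merely make precise what the paper's ``following the same approach'' leaves implicit.
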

\begin{proof}
    Please see Appendix \ref{appA} for the proof. 
\end{proof}

Moreover, the constraints on ${\mathbf{G}}_{l}$ are given in the following proposition. 
\begin{proposition}
For a \gls{sim} implemented with lossless \glspl{ris}, the pseudo-unitary constraint is imposed on $\mathbf{G}_l$ such that
\begin{equation}
{\mathbf{G}}_{l}^{H}\boldsymbol{\Sigma}_{2N}{\mathbf{G}}_{l}=\boldsymbol{\Sigma }_{2N}
\label{eq-pseudoUnitaryT}
\end{equation}%
where $\boldsymbol{\Sigma }_{2N}=\mathrm{blkdiag}\left( \mathbf{I}_{N},-\mathbf{I}%
_{N}\right) $ and $\mathrm{blkdiag}(\cdot)$ returns a block diagonal matrix with the blocks defined in the argument. In addition, reciprocal \glspl{ris} impose the complex-conjugate persymmetric constraint on $\mathbf{G}_l$ such that \begin{equation}
{\mathbf{G}}_{l}=\mathbf{J}_{2N}{\mathbf{G}}_{l}^{\ast }\mathbf{J}_{2N}
\label{eq-blocksymmT}
\end{equation}%
where $\mathbf{J}_{2N}=\mathrm{blkantidiag}\left( \mathbf{I}_{N},\mathbf{I}%
_{N}\right)$ is known as the exchange matrix and $\mathrm{blkantidiag}\left(\cdot\right)$ returns a block antidiagonal matrix with the blocks defined in the argument.
\end{proposition}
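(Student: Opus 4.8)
The plan is to carry the two $S$-domain constraints on a single layer, the unitarity \eqref{eq-losslessS} and the symmetry \eqref{eq-SymmS}, through the explicit $\mathcal{T}$ map that sends $\boldsymbol{\Theta}_l$ to $\mathbf{G}_l$, treating the pseudo-unitary claim \eqref{eq-pseudoUnitaryT} and the persymmetric claim \eqref{eq-blocksymmT} separately. Throughout I would express the four blocks of $\mathbf{G}_l=\mathcal{T}(\boldsymbol{\Theta}_l)$ in terms of the blocks of $\boldsymbol{\Theta}_l$ via the $\mathbf{S}\mapsto\mathbf{T}$ conversion formulas stated above (replacing $\mathbf{S}_{I,ij}$ by $\boldsymbol{\Theta}_{l,ij}$ and $\mathbf{T}_{I,ij}$ by $\mathbf{G}_{l,ij}$). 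These presuppose invertibility of the transmission block $\boldsymbol{\Theta}_{l,21}$, which is exactly the condition for $\mathbf{G}_l$ to exist, and which I would flag as a standing hypothesis.

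For the pseudo-unitary constraint \eqref{eq-pseudoUnitaryT} I would avoid a block substitution and argue through energy conservation, which is the physical content of \eqref{eq-losslessS}. A lossless balanced $2N$-port reflects exactly the incident power for every excitation, so the total reflected power equals the total incident power at its ports. Collecting, for a single layer, the reflected and incident waves at the input ports into $\mathbf{u}$ and the incident and reflected waves at the output ports into $\mathbf{v}$, ordered as in \eqref{eq-T} so that $\mathbf{u}=\mathbf{G}_l\mathbf{v}$, this balance reads $\mathbf{u}^{H}\boldsymbol{\Sigma}_{2N}\mathbf{u}=\mathbf{v}^{H}\boldsymbol{\Sigma}_{2N}\mathbf{v}$, since the $\pm$ pattern of $\boldsymbol{\Sigma}_{2N}=\mathrm{blkdiag}(\mathbf{I}_N,-\mathbf{I}_N)$ is exactly the ``reflected minus incident'' bookkeeping on each side. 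Substituting $\mathbf{u}=\mathbf{G}_l\mathbf{v}$ gives $\mathbf{v}^{H}(\mathbf{G}_l^{H}\boldsymbol{\Sigma}_{2N}\mathbf{G}_l)\mathbf{v}=\mathbf{v}^{H}\boldsymbol{\Sigma}_{2N}\mathbf{v}$, and because $\boldsymbol{\Theta}_{l,21}$ is invertible the vector $\mathbf{v}$ sweeps all of $\mathbb{C}^{2N}$ as the incident waves vary. Since both $\mathbf{G}_l^{H}\boldsymbol{\Sigma}_{2N}\mathbf{G}_l$ and $\boldsymbol{\Sigma}_{2N}$ are Hermitian, equality of these quadratic forms on all of $\mathbb{C}^{2N}$ forces \eqref{eq-pseudoUnitaryT}. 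A purely algebraic cross-check expands \eqref{eq-pseudoUnitaryT} block-wise and matches the three independent blocks against those of $\boldsymbol{\Theta}_l^{H}\boldsymbol{\Theta}_l=\mathbf{I}_{2N}$; e.g. the lower-right block reduces to $\boldsymbol{\Theta}_{l,11}^{H}\boldsymbol{\Theta}_{l,11}+\boldsymbol{\Theta}_{l,21}^{H}\boldsymbol{\Theta}_{l,21}=\mathbf{I}_N$.

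For the persymmetric constraint \eqref{eq-blocksymmT} the key point I would make explicit is that it is a \emph{joint} consequence of losslessness and reciprocity: unitarity together with symmetry gives $\boldsymbol{\Theta}_l^{\ast}=\boldsymbol{\Theta}_l^{-1}$, equivalently $\boldsymbol{\Theta}_l^{\ast}\boldsymbol{\Theta}_l=\mathbf{I}_{2N}$. Multiplying out $\mathbf{J}_{2N}\mathbf{G}_l^{\ast}\mathbf{J}_{2N}$ shows that \eqref{eq-blocksymmT} is equivalent to the two block identities $\mathbf{G}_{l,11}=\mathbf{G}_{l,22}^{\ast}$ and $\mathbf{G}_{l,12}=\mathbf{G}_{l,21}^{\ast}$ (the other two blocks are their conjugates). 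I would verify each by substituting the $\mathcal{T}$ formulas and clearing the factors $\boldsymbol{\Theta}_{l,21}^{\pm1}$: the identity $\mathbf{G}_{l,12}=\mathbf{G}_{l,21}^{\ast}$ reduces to the off-diagonal block equation $\boldsymbol{\Theta}_{l,21}^{\ast}\boldsymbol{\Theta}_{l,11}+\boldsymbol{\Theta}_{l,22}^{\ast}\boldsymbol{\Theta}_{l,21}=\mathbf{0}$ of $\boldsymbol{\Theta}_l^{\ast}\boldsymbol{\Theta}_l=\mathbf{I}_{2N}$, and $\mathbf{G}_{l,11}=\mathbf{G}_{l,22}^{\ast}$ then follows from the same equation together with the diagonal block $\boldsymbol{\Theta}_{l,21}^{\ast}\boldsymbol{\Theta}_{l,12}+\boldsymbol{\Theta}_{l,22}^{\ast}\boldsymbol{\Theta}_{l,22}=\mathbf{I}_N$. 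A more structural alternative, which also explains the appearance of $\mathbf{J}_{2N}$, is to first show by a short block computation using only \eqref{eq-SymmS} that reciprocity alone yields the conjugate-free symplectic relation $\mathbf{G}_l^{T}\boldsymbol{\Gamma}_{2N}\mathbf{G}_l=\boldsymbol{\Gamma}_{2N}$, where $\boldsymbol{\Gamma}_{2N}$ is the skew block-exchange matrix with upper-right block $\mathbf{I}_N$ and lower-left block $-\mathbf{I}_N$. Writing this relation and \eqref{eq-pseudoUnitaryT} as the two inversion formulas $\mathbf{G}_l^{-1}=\boldsymbol{\Sigma}_{2N}\mathbf{G}_l^{H}\boldsymbol{\Sigma}_{2N}$ and $\mathbf{G}_l^{-1}=-\boldsymbol{\Gamma}_{2N}\mathbf{G}_l^{T}\boldsymbol{\Gamma}_{2N}$, equating them, and simplifying with $\boldsymbol{\Sigma}_{2N}\boldsymbol{\Gamma}_{2N}=\mathbf{J}_{2N}=-\boldsymbol{\Gamma}_{2N}\boldsymbol{\Sigma}_{2N}$ and $\boldsymbol{\Sigma}_{2N}^{2}=\mathbf{I}_{2N}$ then gives, after one transpose, $\mathbf{G}_l^{\ast}=\mathbf{J}_{2N}\mathbf{G}_l\mathbf{J}_{2N}$, i.e. \eqref{eq-blocksymmT}.

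I expect the main obstacle to lie in the persymmetric part. The stated conjugate form is \emph{not} implied by reciprocity on its own---reciprocity yields only the transpose-type symplectic identity---so the argument must also invoke losslessness, and the bookkeeping that interleaves transposes with conjugate-transposes, together with the sign identities $\boldsymbol{\Sigma}_{2N}\boldsymbol{\Gamma}_{2N}=-\boldsymbol{\Gamma}_{2N}\boldsymbol{\Sigma}_{2N}=\mathbf{J}_{2N}$ and $\boldsymbol{\Gamma}_{2N}^{2}=-\mathbf{I}_{2N}$, is where sign or ordering slips are most likely. By contrast, the pseudo-unitary constraint is essentially immediate once the power balance is cast in the $\boldsymbol{\Sigma}_{2N}$-weighted form.
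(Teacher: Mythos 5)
Your proof is correct, but it reaches \eqref{eq-blocksymmT} by a genuinely different route than the paper, and the comparison is worth recording. For the pseudo-unitary constraint \eqref{eq-pseudoUnitaryT} you and Appendix~\ref{appB} do the same thing: write energy conservation as a $\boldsymbol{\Sigma}_{2N}$-weighted quadratic identity and substitute the T-parameter relation \eqref{eq-T}; your only additions are rigor points the paper leaves implicit (both sides are Hermitian forms, and the output-side wave vector sweeps all of $\mathbb{C}^{2N}$ because $\boldsymbol{\Theta}_{l,21}$ is invertible, so equality of the forms implies equality of the matrices). For the persymmetric constraint, however, the paper does not compute from \eqref{eq-losslessS}--\eqref{eq-SymmS} at all: it \emph{postulates} the wave-domain premise that the conjugated waves, with incident and reflected roles interchanged, again satisfy the layer's T-parameter relation, and then obtains $\mathbf{J}_{2N}\mathbf{G}_{l}^{\ast}=\mathbf{G}_{l}\mathbf{J}_{2N}$ in two lines. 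That premise is time-reversal invariance, equivalent to $\boldsymbol{\Theta}_{l}^{\ast}\boldsymbol{\Theta}_{l}=\mathbf{I}_{2N}$, and it is justified exactly when the layer is lossless \emph{and} reciprocal, even though the paper presents it as a consequence of reciprocity alone. You instead work purely algebraically from the stated matrix constraints: either by block substitution of the S-to-T conversion formulas into $\boldsymbol{\Theta}_{l}^{\ast}\boldsymbol{\Theta}_{l}=\mathbf{I}_{2N}$ (the two block identities you quote are the right ones), or by your structural variant in which reciprocity alone yields the symplectic relation $\mathbf{G}_{l}^{T}\boldsymbol{\Gamma}_{2N}\mathbf{G}_{l}=\boldsymbol{\Gamma}_{2N}$, which combined with \eqref{eq-pseudoUnitaryT} and the sign identities $\boldsymbol{\Sigma}_{2N}\boldsymbol{\Gamma}_{2N}=-\boldsymbol{\Gamma}_{2N}\boldsymbol{\Sigma}_{2N}=\mathbf{J}_{2N}$ forces \eqref{eq-blocksymmT}; both variants check out. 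Your emphasis that \eqref{eq-blocksymmT} needs losslessness in addition to reciprocity is correct and is the real clarification here: a reciprocal but lossy layer with $\boldsymbol{\Theta}_{l,11}=\boldsymbol{\Theta}_{l,22}=\mathbf{0}$ and $\boldsymbol{\Theta}_{l,12}=\boldsymbol{\Theta}_{l,21}$ symmetric but non-unitary gives $\mathbf{G}_{l,11}=\boldsymbol{\Theta}_{l,21}\neq\left(\boldsymbol{\Theta}_{l,21}^{\ast}\right)^{-1}=\mathbf{G}_{l,22}^{\ast}$, so the constraint fails under reciprocity alone. In exchange for being longer, your derivation is self-contained from \eqref{eq-losslessS}--\eqref{eq-SymmS} and makes explicit which hypothesis is responsible for which identity; the paper's argument is shorter and physically transparent, but only modulo its unproved time-reversal premise.
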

\begin{proof}
    Please see Appendix \ref{appB} for the proof.
\end{proof}
The channel model based on the \gls{t-param} is given in the following proposition.
\begin{proposition}
Assuming A.1--A.3, the channel model based on the $T$%
-parameters is given as%
\begin{equation}
\mathbf{H}=\mathbf{H}_{RI}\mathbf{T}_{I,22}^{-1}\mathbf{H}_{IT}.\label{eq-channel-T}
\end{equation}%
\end{proposition}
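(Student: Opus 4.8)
The plan is to reduce the claim to the already-established $S$-parameter channel model \eqref{eq-channel-S} by invoking the $S$-to-$T$ conversion identities stated earlier in this subsection. The key observation is that \eqref{eq-channel-S} depends on $\mathbf{S}_I$ only through its $(2,1)$th block $\mathbf{S}_{I,21}$; hence it suffices to express this single block in terms of the $T$-parameters of the \gls{sim}.

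First, I would recall the block-wise conversion from $\mathbf{T}_I$ to $\mathbf{S}_I$ given above, whose $(2,1)$th entry reads $\mathbf{S}_{I,21}=\mathbf{T}_{I,22}^{-1}$. This is the crucial identity, and it holds whenever $\mathbf{T}_{I,22}$ is invertible. Invertibility is automatic here: by the forward definition $\mathbf{T}_I\triangleq\mathcal{T}(\mathbf{S}_I)$ one has $\mathbf{T}_{I,22}=\mathbf{S}_{I,21}^{-1}$, so the two blocks are genuine matrix inverses of one another, consistent with the balanced $2N$-port assumption that guarantees $\mathbf{S}_{I,21}$ is square and nonsingular.

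Second, I would substitute this identity directly into \eqref{eq-channel-S}, obtaining $\mathbf{H}=\mathbf{H}_{RI}\mathbf{S}_{I,21}\mathbf{H}_{IT}=\mathbf{H}_{RI}\mathbf{T}_{I,22}^{-1}\mathbf{H}_{IT}$, which is exactly \eqref{eq-channel-T}. Since assumptions A.1--A.3 are already embedded in the derivation of \eqref{eq-channel-S}, they carry over unchanged, so no further hypotheses are needed.

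There is essentially no technical obstacle: the result is an immediate corollary of the definition of the $T$-parameters, reducing to a one-line substitution. The only point requiring a moment of care is the invertibility of $\mathbf{T}_{I,22}$, but this is guaranteed by construction through $\mathbf{T}_{I,22}=\mathbf{S}_{I,21}^{-1}$, so nothing beyond the conversion formula and \eqref{eq-channel-S} is required.
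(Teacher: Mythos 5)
Your proof is correct, but it takes a genuinely different route from the paper's. You reduce the claim to the already-established $S$-parameter channel model \eqref{eq-channel-S} via the block conversion identity $\mathbf{S}_{I,21}=\mathbf{T}_{I,22}^{-1}$, which is indeed immediate from the $T$-to-$S$ conversion formulas stated in the same subsection. The paper instead re-derives the channel from first principles entirely within the $T$-parameter framework: it forms the full end-to-end cascade $\mathbf{T}=\mathbf{T}_{0}\mathbf{T}_{I}\mathbf{T}_{L}$ (with $\mathbf{T}_{0}$, $\mathbf{T}_{L}$ the $T$-parameter matrices of the transmitter--\gls{sim} and \gls{sim}--receiver links), works with the port voltages $\mathbf{v}_{\text{i}}=\mathbf{a}_{\text{i}}+\mathbf{b}_{\text{i}}$ and $\mathbf{v}_{\text{o}}=\mathbf{a}_{\text{o}}+\mathbf{b}_{\text{o}}$, and shows that A.1--A.3 force $\mathbf{a}_{\text{o}}=\mathbf{0}$ and block-diagonal structure on $\mathbf{T}_{0}$ and $\mathbf{T}_{L}$, whence $\mathbf{v}_{\text{o}}=\mathbf{T}_{22}^{-1}\mathbf{v}_{\text{i}}$ and $\mathbf{H}=\mathbf{T}_{L,22}^{-1}\mathbf{T}_{I,22}^{-1}\mathbf{T}_{0,22}^{-1}$. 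Your route buys brevity and makes transparent that the $S$- and $T$-parameter channel expressions are algebraically the same statement; the paper's route buys self-containedness, showing explicitly how A.1--A.3 manifest in the $T$-parameter domain and how $\mathbf{H}_{RI}$ and $\mathbf{H}_{IT}$ themselves emerge as inverse $T$-blocks ($\mathbf{T}_{L,22}^{-1}$ and $\mathbf{T}_{0,22}^{-1}$) rather than being imported from the $S$-parameter result. One small caveat on your argument: the balanced-network property guarantees only that $\mathbf{S}_{I,21}$ is square, not that it is nonsingular; its invertibility is an implicit standing assumption needed for $\mathcal{T}(\mathbf{S}_{I})$, and hence $\mathbf{T}_{I}$, to exist at all, and the paper's own proof makes the corresponding assumption when inverting $\mathbf{T}_{22}$, so this does not undermine your reduction.
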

\begin{proof}
    Please see Appendix \ref{appC} for the proof.
\end{proof}
\begin{remark}
To compute the channel model in \eqref{eq-channel-T}, $\mathbf{T}_I$ needs to be computed using \eqref{eq-T-casc} and its (2,2)th block should be extracted. Hence, it requires $2(L-1)$ matrix multiplications, which are linear operations, unlike the \gls{s-param} counterpart, which
requires $2\left( L-1\right) $ non-linear $\mathcal{S}(\cdot)$ operators. Also, \eqref{eq-T-casc} is more tractable due to its compact nature, unlike the \gls{s-param} counterpart which
is nested and block-level as seen in \eqref{eq-S-casc}. Moreover, the complexity associated with evaluating \eqref{eq-channel-T} is less since it requires a single matrix inversion, unlike the \gls{s-param} counterpart
which requires an excessive number of matrix inversions that grows excessively with $L$. To illustrate the advantages of the channel
model in \eqref{eq-channel-T}, we consider the following example.
\end{remark}

\begin{example}
Let $L=2$. Therefore, $
\mathbf{T}_{I}={\mathbf{G}}_{1}\mathbf{T}_{1}{\mathbf{G}}_{2}$, and its (2,2)th block can be expanded as 
\begin{multline}
\mathbf{T}_{I,22} =  
{\mathbf{G}_{1,21}\mathbf{T}_{1,11}\mathbf{G}_{2,12}+\mathbf{G}_{1,21}%
\mathbf{T}_{1,12}\mathbf{G}_{2,22}} \\ +{\mathbf{G}_{1,22}\mathbf{T}_{1,21}%
\mathbf{G}_{2,12}+\mathbf{G}_{1,22}\mathbf{T}_{1,22}\mathbf{G}_{2,22}}.
\end{multline}%
\end{example}

Table \ref{tab:s-t-param} summarizes the \gls{s-param} and \gls{t-param} modeling in terms of channel models, tractability, controllability,
complexity and constraints, which are important features to solve any optimization
problem involving \gls{sim}.
\begin{table}[]
    \centering
    \caption{\gls{s-param} and \gls{t-param} Modeling Summary.}
    \begin{tabular}{|l|c|c|}
    \hline
    & \gls{s-param} & \gls{t-param} \\ \hline
    Channel Model & \eqref{eq-channel-S} & \eqref{eq-channel-T} \\
    Tractability & Non-tractable & Tractable \\ 
    Controllability & Difficult & Easy \\ 
    Complexity & $\mathcal{O}(\frac{2}{3}L^4N^3)$ & $\mathcal{O}(N^3)$ \\ \hline\hline
    Lossless Constraint & Unitary \eqref{eq-losslessS} & Pseudo-unitary \eqref{eq-pseudoUnitaryT} \\
    Reciprocal Constraint & Symmetric \eqref{eq-SymmS} & Persymmetric \eqref{eq-blocksymmT} \\ \hline
\end{tabular}

    \label{tab:s-t-param}\vspace{-4mm}
\end{table}

\section{T-Parameters Design for SIM}\label{sec-design}
Consider \gls{sim}-aided communication system with a \gls{bs} consisting of $K$ \gls{rf} chains and \gls{sim} with $L$-layer \glspl{ris} and $N$ cells. The \gls{bs} serves $K$ single-antenna users. Therefore, the vector-form received signals at the users can be written as
\begin{equation}
    \mathbf{y} = \mathbf{HPx} + \mathbf{n}
\end{equation}
where $\mathbf{x}\in\mathbb{C}^{K\times1}$ denotes the transmitted signals, $\mathbf{n}\in\mathbb{C}^{K\times1}$ is the \gls{awgn} with $N_0$ the power spectral density,  $\mathbf{P}=\mathrm{diag}\left([\sqrt {p_1},\sqrt {p_2},\ldots,\sqrt {p_K}]^T\right)\in\mathbb{R}^{K\times K}$ denotes the power allocation matrix satisfying, $\sum_kp_k=P_{\max}$, where $P_{\max}$ is the \gls{bs} maximum power budget. 
Furthermore, the objective is to maximize the sum-rate, which can be achieved by considering a two-stage design \cite{Hamad2024-OJCOM}. In stage 1, the \glspl{ris} phase shifts are optimized assuming $\mathbf{P}$ is uniformly fixed, while stage 2 considers $\mathbf{P}$ design assuming fixed \glspl{ris} phase shifts. Therefore, the \gls{sinr} for the $k$th user can be defined as $
\gamma_k= \frac{p_k\left\vert \left[\mathbf{H}\right]_{k,k}\right\vert ^{2}}{\sum_{i\neq k}p_i\left\vert \left[\mathbf{H}\right]_{k,i}\right\vert ^{2}+N_0}$ \cite[Eq. (11)]{Hamad2024-OJCOM}, where $\mathbf{H}$ is defined in \eqref{eq-channel-T}. Consequently, when considering transmissive single-connected \glspl{ris}, we can formulate stage 1 sum-rate maximization as
\begin{subequations}
\begin{align}
    (P1): & \,\max_{\boldsymbol{\Phi}}f\left(\boldsymbol{\Phi}\right) \triangleq {\textstyle\sum_{k=1}^{K}}\log_2 \!\left(1+\gamma_k\right)\label{optA}
 \\
     \text{Subject to,  }& \mathbf{G}_{l} = \mathrm{blkdiag}\left(\overline{\boldsymbol{\Theta }}_{l},\overline{\boldsymbol{\Theta }}_{l}^{-1}\right) \label{optB}\\
     & \overline{\boldsymbol{\Theta }}_{l}=\mathrm{diag}\left(\left[\exp\!\left(j\phi_{l,1}\right),\ldots,\exp\!\left(j\phi_{l,N}\right)\right]\right) \label{optC}\\
    &\phi_{l,n}\in[0,2\pi),\, \forall l, n\label{optD}
    \end{align}
\end{subequations}
where \eqref{optA} is is the objective function, \eqref{optB}--\eqref{optD} are the reciprocity and lossless constraints on the \gls{t-param} matrix of \gls{sim} layers. Note that the \gls{t-param} constraints in \eqref{eq-pseudoUnitaryT}--\eqref{eq-blocksymmT} simplify to \eqref{optB}--\eqref{optD} because transmissive single-connected \glspl{ris} are considered. In addition, $\boldsymbol{\Phi}\in\mathbb{R}^{L\times N}$ is the design variable that essentially stores the \gls{sim} phase shifts such that $[\boldsymbol{\Phi}]_{l,n}=\phi_{l,n}$.

The problem in ($P1$) can be efficiently solved via a \gls{gda} which iteratively updates $\boldsymbol{\Phi}^{(t)}$
using the gradient of \eqref{optA}. The update rule follows is given as
$\boldsymbol{\Phi}^{(t+1)} = \boldsymbol{\Phi}^{(t)} + \alpha^{(t)} \nabla f(\boldsymbol{\Phi}^{(t)})$, 
where $\alpha^{(t)}$ is the Armijo step to ensure convergence \cite{An2023-ICC}, $t$ is the iteration index, and the gradient of the objective function is computed efficiently considering the numerical first order approximation. We initialize $\boldsymbol{\Phi}$ based on the simplified channel model \cite[Eq. (40)]{Matteo2024-LCOM}, and the maximum ratio transmission algorithm developed in \cite{Hamad2024-OJCOM}. In addition, a suitable power allocation shall be considered for stage 2.
\vspace{-3mm}
\section{Numerical Results}\label{sec-results}
This section presents the numerical results for the sum-rate considering the physically consistent channel model developed in Sec. \ref{subsec-T-param}. Unless otherwise stated, $\mathbf{Z}_l$ is computed based on \cite[Eq. (6)]{Gradoni2021-WCL}. The \gls{ris} elements are arranged uniformly over the $yz$-plane, where $N=N_yN_z$, $N_y$ and $N_z$ are the number of elements along $y$-axis and $z$-axis. The separation of adjacent elements in $y$-axis and $z$-axis are denoted by $l_y,l_z$, while the separation between adjacent \gls{sim} layers is denoted by $l_x$. The elements are assumed to be parallel to $z$-axis with identical lengths $\iota=\lambda/4$ and radii $\rho=\lambda/500$, where $\lambda$ denotes the wavelength of the $28$ GHz operating frequency. In addition, \gls{iid} Rayleigh fading channels with unit variance are considered for $\mathbf{H}_{RI}$ and $\mathbf{H}_{IT}$, with $K=2$. In the following, the sum-rate results are presented considering exact \eqref{eq-channel-T} and simplified channel models \cite[Eq. (40)]{Matteo2024-LCOM}. Specifically: 1) ``EE": optimization and evaluation are based on exact channel. 2) ``SE": optimization is based on simplified channel, and evaluation is based on exact channel. 3) ``SS": optimization and evaluation are based on simplified channel.

Fig. \ref{fig:res1}a illustrates the convergence of the \gls{gda}, where $L=3$ and $N=N_yN_z=6\times 6=36$. It is noted that accounting for the exact channel improves the sum-rate. Furthermore, while packing elements closer to each other makes mutual coupling effect stronger, the sum-rate improvement becomes more significant. Fig. \ref{fig:res1}b and Fig. \ref{fig:res1}c illustrate the sum-rate against the number of layers $L\in\{2,3,4,6\}$, where we choose $NL=72$, $N_y=6$, $N_z=N_y/N$, $l_x=\lambda/12(L-1)$, $l_y=\lambda/2$, $l_z=36\lambda/2N$ to relax the impact of mutual coupling and focus on the impact of transmission coefficients. Furthermore, hundred realizations are assumed for the Monte-Carlo simulations, $N_0=1$, $P_{\max}=2$ and uniform power allocation is considered. It is noted from Fig. \ref{fig:res1}b that ``EE", ``SE" and ``SS" have the same sum-rate for $L = 2$ because the simplified channel is only exact for this case. Furthermore, the sum-rate does not improve with $L$ as $N$ is reduced and the losses of the transmission medium are increased. However, the opposite is observed with Fig. \ref{fig:res1}c which considers the Rayleigh-Sommerfeld diffraction coefficients for $\mathbf{S}_{l,21}$ assuming square patch elements with $\iota=\lambda/4$ \cite[Eq. (1)]{An2023-ICC}. It is worth noting that Rayleigh-Sommerfeld diffraction coefficients only accurately models reality when 1) the \gls{ris} element surface area is $\gg\lambda$, 2) and the electromagnetic field is not observed very close to the surface \cite[Sec. III-B]{Ajam2022-TCOM}, \cite{Rev2_1}. Both conditions are breached in the literature and in this scenario. Consequently, adopting Rayleigh-Sommerfeld diffraction coefficients is questionable. 

\begin{figure}
    \centering
    {\subfigure[]{\includegraphics[scale=0.5,trim={3.5mm 2mm 6mm 0},clip]{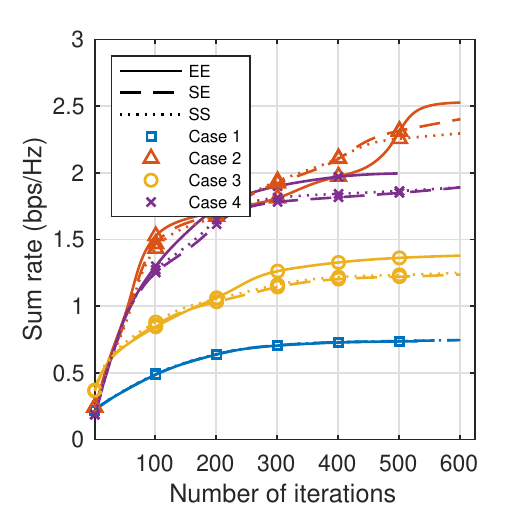}}}
    {\subfigure[]{\includegraphics[scale=0.5,trim={7.75mm 2mm 3.5mm 0},clip]{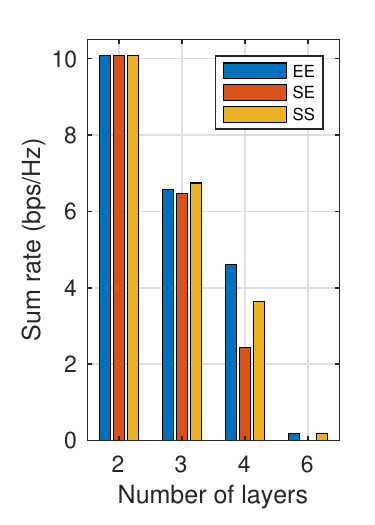}}}
    {\subfigure[]{\includegraphics[scale=0.5,trim={7.75mm 2mm 3.5mm 0},clip]{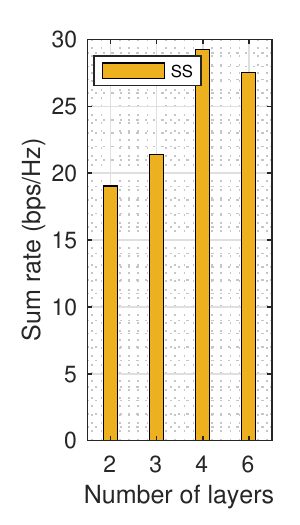}}}\vspace{-2mm}
    \caption{
    (a) Convergence of \gls{gda}. Case 1: ($l_x=l_y=l_z=\lambda/2$), Case 2: ($l_x=l_y=l_z=\lambda/3$), Case 3: ($l_x=\lambda/3,l_y=l_z=\lambda/2$), Case 4: ($l_x=\lambda/2,l_y=l_z=\lambda/3$).\\(b) and (c) Sum-rate against $L\in\{2,3,4,6\}$, $NL=72$, $N_y=6$, $N_z=N_y/N$, $l_x=\lambda/12(L-1)$, $l_y=\lambda/2$, $l_z=36\lambda/2N$.}\vspace{-5mm}
    \label{fig:res1}
\end{figure}

\vspace{-3mm}
\appendices
\section{Proof of T-parameters Matrix for SIM}\label{appA}
To prove \eqref{eq-T-casc}, consider the cascade of networks $P$ and $Q$, where network $P$ is the input layer and network $Q$ is output layer. Let networks $P$ and $Q$ be two $2N$-port balanced networks with \gls{t-param} matrices
denoted as $\mathbf{T}_p$ and $\mathbf{T}_q$. The incident waves at the input and output ports of network $P$ ($Q$) are denoted as $\mathbf{a}_{\text{i}p}, \mathbf{a}_{\text{o}p}$ ($\mathbf{a}_{\text{i}q}, \mathbf{a}_{\text{o}q}$), while reflected waves at the input and output ports of network $P$ ($Q$) are denoted as  $\mathbf{b}_{\text{i}p}, \mathbf{b}_{\text{o}p}$ ($\mathbf{b}_{\text{i}q}, \mathbf{b}_{\text{o}q}$). Following the definition of \gls{t-param} shown in \eqref{eq-T}, we can write 
\begin{align}
\left[ 
\begin{array}{c}
\mathbf{b}_{\text{i}p} \\ 
\mathbf{a}_{\text{i}p}%
\end{array}%
\right] &=\mathbf{T}_p\left[ 
\begin{array}{c}
\mathbf{a}_{\text{o}p} \\ 
\mathbf{b}_{\text{o}p}%
\end{array}\label{eq-proofA-1}
\right] \\
\left[ 
\begin{array}{c}
\mathbf{b}_{\text{i}q } \\ 
\mathbf{a}_{\text{i}q }%
\end{array}%
\right] &=\mathbf{T}_q\left[ 
\begin{array}{c}
\mathbf{a}_{\text{o}q } \\ 
\mathbf{b}_{\text{o}q }%
\end{array}%
\right] \label{eq-proofA-2}
\end{align}%
and because of the cascade configuration, we have $\mathbf{b}_{\text{i}q }=\mathbf{a}_{\text{o}p}$ and $%
\mathbf{a}_{\text{i}q }=\mathbf{b}_{\text{o}p}$. Hence, we can write \eqref{eq-proofA-2} as 
\begin{equation}
\left[ 
\begin{array}{c}
\mathbf{a}_{\text{o}p} \\ 
\mathbf{b}_{\text{o}p}%
\end{array}%
\right] =\mathbf{T}_q\left[ 
\begin{array}{c}
\mathbf{a}_{\text{o}q } \\ 
\mathbf{b}_{\text{o}q }%
\end{array}%
\right]. \label{eq-proofA-3}
\end{equation}%
Hence, by substituting \eqref{eq-proofA-3} in \eqref{eq-proofA-1} we obtain
\begin{align}
\left[ 
\begin{array}{c}
\mathbf{b}_{\text{i}p} \\ 
\mathbf{a}_{\text{i}p}%
\end{array}%
\right] & =\mathbf{T}_p\mathbf{T}_q\left[ 
\begin{array}{c}
\mathbf{a}_{\text{o}q } \\ 
\mathbf{b}_{\text{o}q }%
\end{array}%
\right]
\end{align}
Consequently, the equivalent multiport network has a \gls{t-param} matrix that can be denoted as $\mathbf{T}_{pq}\triangleq\mathbf{T}_p\mathbf{T}_q$. Following the same approach, \eqref{eq-T-casc} is proved. 
\vspace{-3mm}
\section{Proof of T-parameters Constraints}\label{appB}
To prove the lossless constraint imposed on $\mathbf{G}_{l}$, we
denote the incident waves (reflected waves) at the input and output ports of the $l$th layer \gls{ris} as $\mathbf{a}_{\text{i}l}$ and $\mathbf{a}_{\text{o}l}$ ($\mathbf{b}_{\text{i}l}$ and $\mathbf{b}_{\text{o}l}$). Consider the energy conservation law \cite{Markos2008wave}, we can write $\left\Vert \mathbf{b}_{\text{i}l}\right\Vert ^{2}-\left\Vert \mathbf{a}_{%
\text{i}l}\right\Vert ^{2}=\left\Vert \mathbf{a}_{\text{o}l}\right\Vert
^{2}-\left\Vert \mathbf{b}_{\text{o}l}\right\Vert ^{2}$. This can be represented in a vector form as follows, %
\begin{equation}
\left[ 
\begin{array}{c}
\mathbf{b}_{\text{i}l} \\ 
\mathbf{a}_{\text{i}l}%
\end{array}%
\right] ^{H}\boldsymbol{\Sigma }_{2N}\left[ 
\begin{array}{c}
\mathbf{b}_{\text{i}l} \\ 
\mathbf{a}_{\text{i}l}%
\end{array}%
\right] =\left[ 
\begin{array}{c}
\mathbf{a}_{\text{o}l} \\ 
\mathbf{b}_{\text{o}l}%
\end{array}%
\right] ^{H}\boldsymbol{\Sigma }_{2N}\left[ 
\begin{array}{c}
\mathbf{a}_{\text{o}l} \\ 
\mathbf{b}_{\text{o}l}%
\end{array}%
\right]. \label{eq-proofB-1}
\end{equation}%
Following the definition in \eqref{eq-T} with a notation change and applying the hermitian we can obtain
\begin{equation}
\left[ 
\begin{array}{c}
\mathbf{b}_{\text{i}l} \\ 
\mathbf{a}_{\text{i}l}%
\end{array}%
\right] ^{H}=\left[ 
\begin{array}{c}
\mathbf{a}_{\text{o}l} \\ 
\mathbf{b}_{\text{o}l}%
\end{array}%
\right] ^{H}\mathbf{G}_{l}^{H}.\label{eq-proofB-2}
\end{equation}%
By substituting \eqref{eq-proofB-2} in \eqref{eq-proofB-1}, we obtain 
\begin{equation}
\!\!\!\!\left[ 
\begin{array}{c}
\mathbf{a}_{\text{o}l} \\ 
\mathbf{b}_{\text{o}l}%
\end{array}%
\right] ^{H}\mathbf{G}_{l}^{H}\boldsymbol{\Sigma }_{2N}\mathbf{G}_{l}\!\left[ 
\begin{array}{c}
\mathbf{a}_{\text{o}l} \\ 
\mathbf{b}_{\text{o}l}%
\end{array}%
\right]  \!=\!\left[ 
\begin{array}{c}
\mathbf{a}_{\text{o}l} \\ 
\mathbf{b}_{\text{o}l}%
\end{array}%
\right] ^{H}\boldsymbol{\Sigma }_{2N}\! \left[ 
\begin{array}{c}
\mathbf{a}_{\text{o}l} \\ 
\mathbf{b}_{\text{o}l}%
\end{array}%
\right]
\end{equation}
which proves \eqref{eq-pseudoUnitaryT}.

A reciprocal \gls{ris} imposes the following on $\mathbf{G}_{l}$
\begin{align}
\left[ 
\begin{array}{c}
\mathbf{a}_{\text{i}l} \\ 
\mathbf{b}_{\text{i}l}%
\end{array}%
\right] ^{\ast }& =\mathbf{G}_l\left[ 
\begin{array}{c}
\mathbf{b}_{\text{o}l} \\ 
\mathbf{a}_{\text{o}l}%
\end{array}%
\right] ^{\ast }  \notag \\
\mathbf{J}_{2N}\left[ 
\begin{array}{c}
\mathbf{b}_{\text{i}l} \\ 
\mathbf{a}_{\text{i}l}%
\end{array}%
\right] ^{\ast }& =\mathbf{G}_l\mathbf{J}_{2N}\left[ 
\begin{array}{c}
\mathbf{a}_{\text{o}l} \\ 
\mathbf{b}_{\text{o}l}%
\end{array}%
\right] ^{\ast }.\label{eq-proofB-3}
\end{align}%
By taking the transpose of \eqref{eq-proofB-2}, we obtain
\begin{equation}
\left[ 
\begin{array}{c}
\mathbf{b}_{\text{i}l} \\ 
\mathbf{a}_{\text{i}l}%
\end{array}%
\right] ^{\ast }=\mathbf{G}_l^{\ast }\left[ 
\begin{array}{c}
\mathbf{a}_{\text{o}l} \\ 
\mathbf{b}_{\text{o}l}%
\end{array}%
\right] ^{\ast }\label{eq-proofB-4}
\end{equation}%
Hence, we substitute \eqref{eq-proofB-4} in \eqref{eq-proofB-3} to obtain
\begin{equation}
\mathbf{J}_{2N}\mathbf{G}_{l}^{\ast }\left[ 
\begin{array}{c}
\mathbf{a}_{\text{o}l} \\ 
\mathbf{b}_{\text{o}l}%
\end{array}%
\right] ^{\ast }=\mathbf{G}_{l}\mathbf{J}_{2N}\left[ 
\begin{array}{c}
\mathbf{a}_{\text{o}l} \\ 
\mathbf{b}_{\text{o}l}%
\end{array}%
\right] ^{\ast }.
\end{equation}%
Therefore, 
$\mathbf{J}_{2N}\mathbf{G}_{l}^{\ast }=\mathbf{G}_{l}\mathbf{J}_{2N}$,
which proves
\eqref{eq-blocksymmT}.
\vspace{-3mm}
\section{Proof of the SIM Channel Model Based on T-parameters}\label{appC} 
 The proof is given by assuming $K=N$, which does not have an impact on the preposition. Hence, the \gls{t-param} matrix of the wireless channel between the $L$th layer \gls{ris} and the users (transmitter antennas and
the $1$st layer \gls{ris}) can be denoted by $\mathbf{T}_{L}\in\mathbb{C}^{2N\times2N}$ ($\mathbf{T}_{0}\in\mathbb{C}^{2N\times2N}$).  Therefore, the equivalent \gls{t-param} matrix of the cascade can be written as $\mathbf{T}=\mathbf{T}_{0}\mathbf{T}_{I}\mathbf{T}_{L}$, where its ($i,j$)th block is denoted by $\mathbf{T}_{ij}$. By denoting the incident waves (reflected waves) at the transmitting antennas ports and the receiving antenna ports as $\mathbf{a}_{\text{i}}$ and $\mathbf{a}_{\text{o}}$ ($\mathbf{b}_{\text{i}}$ and $\mathbf{b}_{\text{o}}$), we can define the voltages vector at the transmitting antennas ports as%
\begin{equation}
\mathbf{v}_{\text{i}}=\mathbf{a}_{\text{i}}+\mathbf{b}_{\text{i}}  
=\left( \mathbf{T}_{21}+\mathbf{T}_{11}\right) \mathbf{a}_{\text{o}%
}+\left( \mathbf{T}_{22}+\mathbf{T}_{12}\right) \mathbf{b}_{\text{o}}
\end{equation}%
Therefore, we get an expression for $\mathbf{a}_{\text{o}}$ and $\mathbf{b}_{%
\text{o}}$ such that%
\begin{align}
\mathbf{a}_{\text{o}}&=\left( \mathbf{T}_{21}+\mathbf{T}_{11}\right)
^{-1}\left( \mathbf{v}_{\text{i}}-\left( \mathbf{T}_{22}+\mathbf{T}%
_{12}\right) \mathbf{b}_{\text{o}}\right) \\
\mathbf{b}_{\text{o}}&=\left( \mathbf{T}_{22}+\mathbf{T}_{12}\right)
^{-1}\left( \mathbf{v}_{\text{i}}-\left( \mathbf{T}_{21}+\mathbf{T}%
_{11}\right) \mathbf{a}_{\text{o}}\right) .
\end{align}%
Furthermore, we can express voltages vector at the receiving antennas ports as
\begin{multline}
\mathbf{v}_{\text{o}} =\mathbf{a}_{\text{o}}+\mathbf{b}_{\text{o}} 
 =\left( \mathbf{T}_{22}+\mathbf{T}_{12}\right) ^{-1}\mathbf{v}_{\text{i}%
}\\+\left( \mathbf{I}_N-\left( \mathbf{T}_{22}+\mathbf{T}_{12}\right)
^{-1}\left( \mathbf{T}_{21}+\mathbf{T}_{11}\right) \right) \mathbf{a}_{\text{%
o}}.
\end{multline}%
Assuming A.1--A.3 dictates $\mathbf{a}_{\text{o}}=\mathbf{0}$\cite[Eq. (33)]{Matteo2024-LCOM}, $\mathbf{T}_{0}=\mathrm{blkdiag}\left(\mathbf{0},\mathbf{T}_{0,22}\right)$ and $\mathbf{T}_{L}=\mathrm{blkdiag}\left(\mathbf{T}_{L,11},\mathbf{T}_{L,22}\right)$. Therefore, $\mathbf{T}_{11}=\mathbf{T}_{12}=
\mathbf{0}$, and we get $
\mathbf{v}_{\text{o}} =\mathbf{T}_{22} ^{-1}%
\mathbf{v}_{\text{i}}$. By noting that $\mathbf{v}_{\text{o}}=\mathbf{Hv}_{\text{i}}$, we can express the channel as 
\begin{align}
\mathbf{H} &=\mathbf{T}_{L,22}^{-1}\mathbf{T}_{I,22}^{-1}\mathbf{T}_{0,22}^{-1}\notag\\
&\overset{(a)}{=}\mathbf{H}_{RI}\mathbf{T}_{I,22}^{-1}\mathbf{H}_{IT}
\end{align}%
where ($a$) holds for $K\ne N$. Hence, \eqref{eq-channel-T} is proved.
\vspace{-2mm}
\bibliographystyle{IEEEtran}

\end{document}